\begin{document}


\title{NetChain: Authe\textcolor{black}{n}ticat\textcolor{black}{e}d Blockchain \textcolor{black}{T}op-$k$ Graph Data Queries and its Application in Asset Management}

\author{Hongguang Zhao\and
Xu Yang\and
Saiyu Qi\inst{(}\Envelope\inst{)} \and
Qiuhao Wang \and Ke Li}

%
\institute{Xi’an Jiaotong University, Xi’an, Shaanxi, China \\
\email{saiyu-qi@xjtu.edu.cn}}

\maketitle

\begin{abstract}
As a valuable digital resource, graph data is an important data asset, which has been widely utilized across various fields to optimize decision-making and enable smarter solutions. To manage data assets, blockchain is widely used to enable data sharing and trading, but it cannot supply complex analytical queries. vChain was proposed to achieve verifiable boolean queries over blockchain by designing an embedded authenticated data structure (ADS). However, for generating (non-)existence proofs, vChain suffers from expensive storage and computation costs in ADS construction, along with high communication and verification costs. In this paper, we propose a novel NetChain framework that enables efficient top-k queries over on-chain graph data with verifiability. Specifically, we design a novel authenticated two-layer index that supports (non-)existence proof generation in block-level and built-in verifiability for matched objects. To further alleviate the computation and verification overhead, an optimized variant NetChain$^+$ is derived. 
The authenticity of our frameworks is validated through security analysis.
Evaluations show that NetChain and NetChain$^+$ outperform vChain, respectively achieving up to $85\times$ and $31\times$ improvements on ADS construction.
Moreover, compared with vChain, NetChain$^+$ reduces the communication and verification costs by $87\%$ and $96\%$ respectively.
\end{abstract}

\setlength{\parskip}{0pt}

\section{Introduction}\label{introduction}

In the digital era, data assets have become essential resources, encompassing various forms such as IoT data, medical information, and many other types that hold immense potential for fostering innovation and developing new services \cite{data_asset}.
Among these, graph data stands out as a powerful type of data asset, which is widely applied in fields like social networks \cite{li2023rumor} to facilitate the analysis of user interactions and knowledge graphs \cite{yang2023knowledge} to promote applications like recommendation systems and intelligent search engines. Generally speaking,
Graph data’s ability to uncover complex relationships and power smarter, data-driven solutions underscores its significant contribution to advancing human development and innovation across industries.

In order to enlarge the value of graph data assets, different institutes tend to share their owned graph data\cite{shared}.
For example, several online communities could share social graph data to assist in building more precise recommendation systems.
Blockchain \cite{DBLP:journals/tkde/GuoXWWWJ24} is a fundamental technique enabling multiple participants in different trust domains to share and monetize data asset with security guarantee.
As a distributed ledger, blockchain can not only support data asset ownership confirmation, but also supply data storage and query service with integrity assurance \cite{consortium}.
However, mainstream
 blockchains only support verifiability for simple query types, without the ability to enable verifiable analytical queries over on-chain graph data, e.g., classic top-$k$ social network query.

Recently, many works have studied authenticated queries over blockchain stored data \cite{conference:vchain,liu2023veffchain,cheng2024lightweight,xu2023empowering} to provide correct blockchain-aided data query services.
As a representative work, vChain\cite{conference:vchain} implements verifiable boolean queries over blockchain.
In specific, vChain constructs a built-in merkle hash tree (MHT) on intra-block objects as the authenticated data structure (ADS), where each MHT node stores the union of keywords (denoted by $W$) in child nodes as well as the multiset's accumulative value of $W$ (denoted by $AttDigest$) to generate (non-)existence proofs as needed.
However, vChain suffers two efficiency bottlenecks when orienting towards on-chain graph data.
First, vChain suffers from expensive per-block ADS construction cost for generating non-existence proofs. To construct the ADS for a block, each MHT node needs to store an additional keywords set $W$ and the computation of $AttDigest$ involves complex asymmetric cryptography generation. Second, vChain incurs significant query authentication cost for generating existence proofs. Considering that a block contains multiple matched objects, vChain needs to generate merkle proofs for each of them as their existence proofs, incurring high communication and verification costs for the query.

To address these issues, in this paper, we propose an \emph{authenticated \mbox{top-$k$} graph query framework (NetChain)} over blockchain to provide efficient and authenticated query services over on-chain graph data. 
Comparing with vChain, We devise new ADS to achieve lightweight ADS construction cost (only involves hash functions) and lightweight query authentication cost (generates (non-)existence proofs in block level).
For a top-$k$ graph query, we divide blocks of the blockchain into (non-)matched blocks where a (non-)matched block (not) contains matched objects. We then propose a novel intra-block embedded ADS named \emph{authenticated two-layer index} to reduce ADS construction costs and deal with (non-)existence proofs efficiently.
To prove a (non-)matched block, we construct a \emph{compound key-based sorted merkle tree (SMT)} (the upper layer of the index) solely using hash functions on all intra-block compound keys (a novel concept to index clusters of matched objects), which not only reduces the ADS generation and storage costs but also enables efficient (non-)existence proof generation for the block.
Then for each matched block, we construct an \emph{ordered hash chain} (the bottom layer of the index) to organize all objects indexed by the same compound key in the block with built-in verifiability, avoiding the necessity to generate existence proofs for individual objects.  
Furthermore, we extend NetChain to an optimized variant NetChain$^{+}$ by customizing an \emph{inter-block link construction} and a \emph{two-round scan mechanism} to avoid the authentication for non-matched blocks and invalid objects, thus greatly reducing authentication cost. To summarize, the contributions in this paper are as follows:
\begin{itemize}
    \item[$\bullet$] We propose a novel framework termed NetChain to support authenticated top-$k$ graph queries over blockchain. To the best of our knowledge, this is the first attempt at graph-oriented blockchain verifiable query processing.
    \item[$\bullet$] We design a novel authenticated two-layer index, composed of a compound key-based SMT and a series of ordered hash chains, as the built-in ADS to support verifiable top-$k$ graph query.
    \item[$\bullet$] To further alleviate the authentication cost, we propose NetChain$^{+}$, which includes an inter-block link construction and a two-round scan mechanism.
    \item[$\bullet$] Finally, evaluation results exhibit that NetChain and NetChain$^+$ respectively yields up to $85 \times$ and $31\times$ improvements in ADS construction time  compared with the state-of-the-art framework vChain, and NetChain reduces the ADS size by $65\%$. 
    Meanwhile, compared with vChain, NetChain$^+$ reduces the communication and verification cost by $87\%$ and $96\%$ respectively.
    
\end{itemize}



\section{Related Work}
\noindent \textbf{Verifiable Query Over Blockchain.}
A series of solutions are proposed to enable verifiable queries over blockchain. 
Xu et al.\cite{conference:vchain} proposed vChain, the first work that uses embedded ADSs to achieve boolean range queries over blockchain with data integrity guarantee.
However, vChain utilizes expensive multiset accumulator to construct the ADS, causing significant computation and storage overhead.
Liu et al. \cite{liu2023veffchain} proposed \emph{veffChain} to supply latest-$k$ query over blockchain with freshness guarantee, but the parameter $k$ is pre-assigned and cannot be modified as needed.
Cheng et al.\cite{cheng2024lightweight} proposed the top-$k$ query processing on numeric attributes with specific keyword conditions for the first time, where the miner builds a sorted merkle tree for each keyword as the index, leading to heavy burden for full nodes.
Moreover, Xu et al.\cite{xu2023empowering} proposed an authenticated way to maintain the Spatial-Temporal-Keywords transactions in blockchain.
Zhang et al.\cite{zhang2024cole} designed a column-based learned storage for ETH, effectively facilitating the provenance queries for states.
Pang et al.\cite{pang2020authqx} supplied lightweight verifiable queries using trust execution environment to build hierarchical ADS.
However, none of the existing solutions focus on the query for on-chain graph data.

\noindent \textbf{Query On Outsourced Graph Data.}
Graph data is often outsourced to external providers due to data sharing or storage limitations, inducing the demand for querying on outsourced graph data.
Wu et al.\cite{wu2023enabling} proposed PAGB, a method that enables verifiable query for property graph data stored in hybrid-storage blockchain while protecting data privacy.
Li et al.\cite{li2024authenticated} worked on the verifiable keyword queries on graphs in blockchain-assisted cloud through aggregating invalid paths that are absent from the result trees.
Ge et al.\cite{ge2023privacy} proposed a scheme that supports querying for the subgraph same as the queried graph by using two servers without collusion.
Wang et al.\cite{wang2024secgraph} proposed a scheme named \emph{SecGraph} that enables privacy-preserving query and update on encrypted outsourced graph data by leveraging Intel SGX.
However, these works focus on cloud-based graph queries, while the on-chain graph query solution is still lacking.

\section{Preliminaries}

\textbf{Cryptographic Hash Function.}
A cryptographic hash function $H(\cdot)$ can map data of arbitrary size to a fixed-length string. It should be collisionless, i.e., given any two different messages $m_1,m_2$, the probability of $H(m_1)\text{=}H(m_2)$ is negligible.

\begin{figure}[t]
    \setlength{\abovecaptionskip}{0 cm}
    \setlength{\belowcaptionskip}{0 cm}
    \centering
    \includegraphics[width=0.98\textwidth]{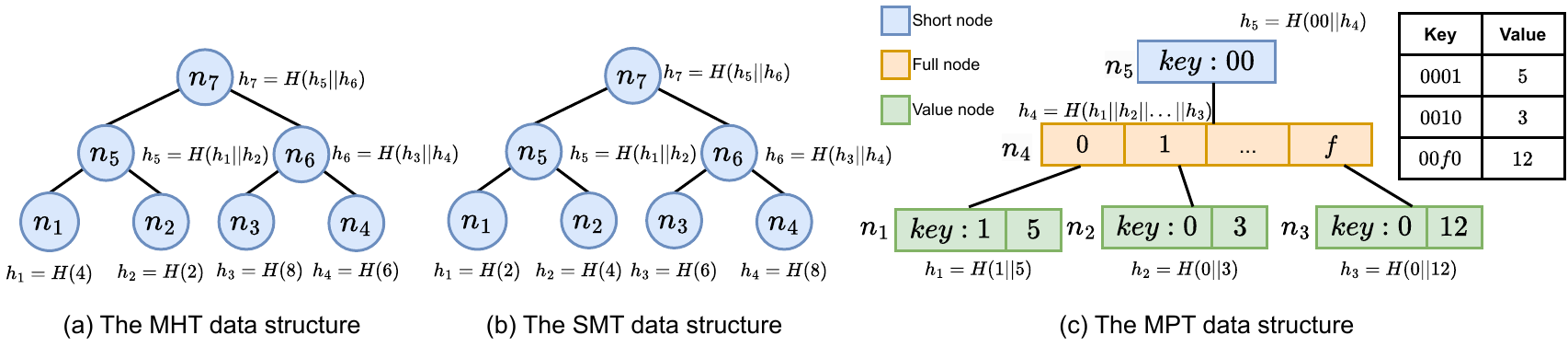}
    \caption{The data structures of MHT, SMT and MPT.}
    \label{fig:MHT}
    \vspace{-0.5 cm}
\end{figure}

\noindent\textbf{Merkle Hash Tree and Sorted Merkle Tree.}
Merkle hash tree (MHT)\cite{proceeding:MHT} is a classic ADS that could efficiently prove the existence of some objects.
Through building a binary hash tree, MHT authenticates multiple values using a single hash digest.
Fig.\ref{fig:MHT}(a) shows a MHT build on a set $S=\{2, 4, 6, 8\}$.
Each leaf node $n_i$ stores a value $v$ in $S$, while its hash digest $h_i$ is $H(v)$, e.g., $h_1=H(2)$.
\textcolor{black}{An internal node stores $h_l$ and $h_r$, the digests of the left and right child nodes, respectively. Its hash digest is defined as $H(h_l||h_r)$, e.g., $h_5=H(h_1||h_2)$.}
To prove the existence of value $2$, the merkle proof is generated as $\pi_{n_2}=\langle n_2, h_1, h_6 \rangle$.
Then the verifier rebuilds the root hash using $\pi_{n_2}$, i.e., $H(H(h_1||H(2))||h_6)$, and compares it with the root hash to check the existence of value $2$.
To efficiently prove the non-existence of objects,
Laurie et al.\cite{article:SMT} proposed a variant of MHT named sorted merkle tree (SMT). Unlike the traditional MHT, the leaf nodes in the SMT are sorted according to their values\footnote{If the values are non-numeric data, we use lexicographic sorting.}, shown in Fig.\ref{fig:MHT}(b). To prove the non-existence of value $5$, we only required to generate existence proofs for the two adjacent leaf nodes $n_2$ and $n_3$, which are exactly the boundaries of $5$, i.e., $\pi_{n_2}=\langle n_2, h_1, h_6\rangle$ and $\pi_{n_3}=\langle n_3, h_4,h_5\rangle$. After passing the verification of the existence of the two nodes, the verifier believes that the value $5$ does not exist.

\noindent\textbf{Merkle Patricia Trie.}
Merkle patricia trie (MPT) is an authenticated key-value store with MHT and patricia trie as the underlying data structures.
As depicted in Fig.\ref{fig:MHT}(c), each key is inserted into MPT according to its prefix path.
Similar to MHT, each node in MPT corresponds to a hash digest, e.g., $h_4=H(h_1||h_2||...|| h_3)$ and $h_5=H(00||h_4)$.
Thus, multiple nodes in the merkle path will be modified once updating a key.
For example, if the value for key '$00f0$' is updated, $h_3$, $h_4$ and $h_5$ will be updated recursively.
For simplicity, in this paper, we abstract the operations of MPT into three functions: \textsf{\footnotesize{Set}}$(k,v)$: it updates the value of key $k$ as $v$;
\textsf{\footnotesize{Get}}$(k') \rightarrow \langle v', \pi \rangle$: it outputs the value $v'$ and the merkle proof $\pi$ associated with key $k'$;
\textsf{\footnotesize{KVcheck}}$(H_m, k, v, \pi) \rightarrow 0/1$: it checks if $v$ is the associated value of key $k$ based on MPT's root hash $H_m$ and a merkle proof $\pi$. If correct, it outputs 1, otherwise 0.

\section{System Overview}


\noindent\textbf{Problem Definition.}
In this paper, we mainly focus on social graphs modeled as $G=\langle V,E \rangle$, where $V$ denotes the vertex set and $E$ denotes the edge set.
As shown in Fig.\ref{fig:system_overview}(a), each vertex represents a person with a unique identifier, and each edge in the form of $\langle type, weight \rangle$ describes the relationship between two persons.
For example, there is a 'friend'-typed edge between two vertexes '0a' and '08' with a weight of 5. This edge refers that the two persons with identifiers '0a' and '08' are friends exhibiting an intimacy of 5.
A top-$k$ graph query $Q$ = $(u_q,type_q,k)$ refers to retrieving $k$ edges with the highest weights from all edges of type $type_q$ incident to the vertex $u_q$.
For example, the top-$2$ closest friends of '0a' are '3f' and '08'. Table. \ref{tab:tab2} presents the notations used in this paper.


\noindent\textbf{System Model.}
We strive to achieve authenticated top-$k$ graph query over blockchain-stored graph data.
Analogous to vChain\cite{conference:vchain}, we model the transactions of the blockchain as objects.
An object is defined as $\langle u, v, type, w \rangle$ to present an edge, where $u$ is the start vertex while $v$ is the end vertex, $type$ is the type of the edge, and $w$ is the edge weight.
To ensure data integrity, an ADS is built over intra-block objects for each block of the blockchain. A block is identified by a unique incremental identifier $id$ (i.e., the block height) and denoted by $blk_{id}$.
Fig.\ref{fig:system_overview}(b) shows the layout of a block to store the objects (edges) in Fig.\ref{fig:system_overview}(a).
The system model of NetChain is composed of three parts: (i) Miner, a full node who is responsible for collecting transactions and packing them into new blocks. (ii) User, a light node that merely records the block headers acquired from the blockchain network. The User issues query requests and verifies the returned results. (iii) Service Provider (SP), a full node who maintains a copy of the whole blockchain. Upon receiving a query request from a User, SP provides search services in its local copy of the blockchain and returns search results with the verification object (VO). 

In this paper, we aim to execute top-$k$ graph queries over blocks whose $id$ is within a time window $[lb,ub]$.
Accordingly, a query request is defined as $Q=\{u_q, type_q, k, [lb,ub]\}$.
For the convenience of later descriptions, we denote the object having $u=u_q$ and $type=type_q$ as a matched object, and the block containing at least one matched object as a matched block.

\noindent\textbf{Threat Model.}
In our threat model, the User is considered as honest, and we assume that most of Miners work normally.
The SP is considered a potential adversary. In case of device failure or SP intends to conserve the computational resource, the SP may not execute the predefined procedure, and return tempered or incomplete search results to User. So it is necessary to devise an authenticated top-$k$ graph query framework to ensure 
\textcolor{black}{(i) \emph{Soundness:} all objects in the search result are real and exist in the blockchain; (ii) \emph{Correctness:} all objects in the search result satisfy the query condition, and haven't been tampered with, and (iii) \emph{Completeness:} no object satisfying the query is missing from the result.}

\begin{table}[t]\scriptsize
\begin{center}
\caption{Frequently-used notations.}
\label{tab:tab2}
\begin{tabular}{c|c|c|c}
\hline
\textbf{Notation} & \textbf{Definition} & \textbf{Notation} & \textbf{Definition}\\
\hline
$u,type,w$ & Vertex, edge type and edge weight & $o$ & Graph data object $\langle t,u,v,type,w \rangle$\\

$\mathcal{K}$ & Compound key $\langle u,type \rangle$&  $\mathcal{V}$ & Compound value $\langle v,w \rangle$ associated with $\mathcal{K}$\\
$C$ & Hash chain &  $ci$ & Hash chain item\\
$Q$ & Graph query &  $[lb,ub]$ & Query time window associated with $Q$\\
$\text{ADS}$ & Authenticated data structure &  VO & Verifiable object\\
$ptr,ptr_h$ & Hash pointer &  $blk_i$ & Block with height $i$\\
$r_i$ & Matched result in $blk_i$ &  $\pi_i$ & (Non-)existence proof for $blk_i$\\
$T_s$ & Sorted merkle tree (SMT) &  $H_s$ & Root hash of the $T_s$\\
$T_m$ & Merkle Patricia trie (MPT) &  $H_m$ & Root hash of the  $T_m$\\
\hline
\end{tabular}
\vspace{-3em}
\end{center}
\end{table}

\noindent\textbf{Design Goals.} 
In this paper, NetChain aims to achieve the following goals: (i) \emph{Authenticated top-k graph query.} NetChain should enable User to issue top-k graph queries towards SP and verify the returned search result; (ii) \emph{Efficient ADS generation.} NetChain should enable low generation time cost of ADS, and minimize its storage cost; (iii) \emph{Low communication cost.} NetChain should reduce the proof (i.e., VO) size largely, and (iv) \emph{Lightweight verification.} NetChain should support cost-efficient verification for Users, especially for resource-limited Users, such as smart mobile phones, wearable devices, etc.

\vspace{-1em}
\section{NetChain}

We first introduce the classic vChain framework\cite{conference:vchain} as a baseline solution and discuss its infeasibility when used in our scenario, from which we derive our design.
In vChain, a merkle hash tree (MHT) is built on all intra-block objects for each block of the blockchain as an ADS. Each MHT node contains $W$ and $AttDigest$.
For a leaf node, $W$ is a set of keywords in the object bind to it.
For an internal node, $W$ is the union of keywords sets of its child nodes.
Accordingly, $AttDigest$ is computed as the accumulative value of $W$, which is employed to generate non-existence proof for non-matched objects.
Given a query request, SP checks a block by traversing its MHT in a top-down fashion to handle (non-)matched objects.
For each traversed MHT node $n$, if its $W$ does not contain any matched keyword, SP stops searching, generates a non-existence proof based on $Attdigest$ and adds it into VO. The proof is used to prove that all the objects in the subtree rooted by $n$ mismatch the query condition.
Once reaching a leaf node that satisfies the query condition, SP will add the associated object into R, generate a merkle proof as the existence proof of the associated matched object and insert it into VO.
Then, R and VO are returned to User as the response.
Finally, User checks all the (non-)existence proofs in VO to verify the result.
 \begin{figure}[t]
    \centering
\includegraphics[width=0.8\textwidth]{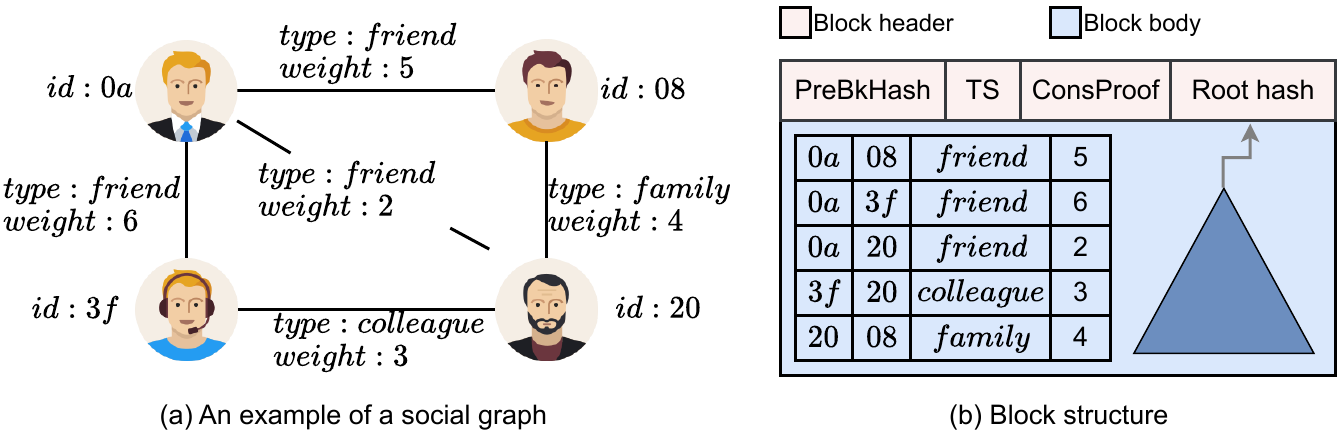}
    \caption{System overview.}
    \label{fig:system_overview}
    \vspace{-2.5em}
\end{figure}

\noindent \textbf{Design Overview.}
Apparently, vChain can be adapted to support authenticated top-$k$ graph query by returning all matched objects and letting User select top-$k$ results by itself.
However, vChain raises expensive overhead to process (non-)existence proofs.
First, the per-block ADS construction incurs expensive construction cost to deal with non-existence proofs.
Considering the MHT (ADS) of a block. Each MHT node has to store a keywords set $W$, leading to substantial storage cost for the block.
Meanwhile, the computation of the $Attdigest$ involves asymmetric cryptography operations, consuming lots of computation resources.
Second, the per-query VO construction spends significant authentication cost to deal with existence proofs.
Considering the matched objects of a query. The VO needs to include individual existence proofs for each of them and requires to be transmitted to User for verification, incurring expensive communication and verification costs.


We devise new ADSs to deal with (non-)existence proofs in a lightweight way. Considering a top-$k$ graph query $Q$.
Instead of proving (non-)existence for \mbox{(non-)matched} objects about $Q$, we choose to prove (non-)existence for \mbox{(non-)matched} blocks, where a matched block contains matched objects and vice versa. Given $Q=\{u_q, type_q, k, [lb,ub]\}$, our observation is that all matched objects must fulfill the query condition $u=u_q$ and $type=type_q$.
As a result, we can divide all the objects within a block into clusters with each one indexed by a unique compound key $\mathcal{K}$ = $\langle u,type \rangle$.
We then use the sorted merkle tree (SMT) as the underlying data structure to construct a \emph{compound key-based SMT} over these compound keys for the block.
In this way, SP can easily prove the block as a (non-)matched one by generating a (non-)existence proof to prove (non-)existence of a compound key $\mathcal{K}=\langle u_q,type_q \rangle$.
Compared with vChain, the compound-key based SMT can be constructed efficiently since it only involves hash functions, and occupies less storage since each SMT node merely retains the hash digest of its child nodes.

For a matched block about $Q$, we further remove the need of generating individual existence proofs for matched objects within the block.
Our idea is to aggregate objects indexed by the same compound key as an \emph{ordered hash chain} with built-in verifiability.
For a compound key $\mathcal{K}$ in a SMT leaf node $n$, objects with the same $\mathcal{K}$ are organized and sorted by their weights in descending order to form a hash chain. Each chain item is composed of an object and a hash pointer.
To preserve the data verifiability, the hash pointer in each chain item is computed as the hash digest of its successor, while the hash pointer of the head chain item is embedded into $n$, which can be verified along with the merkle proof. In this way, the verifiability of hash items are rooted to the verifiability of the leaf node.
As a result, SP can solely return top-$k$ chain items as matched objects, and User can directly verify them via hash chain re-computation.

Finally, we build an authenticated two-layer index by combining compound key-based SMT with ordered hash chains as an ADS of a block and stored in the block body.
For example, there are four compound keys for objects of a block as shown in Fig.\ref{fig:design_overview}(a), and a compound key-based SMT is built on them as shown in Fig.\ref{fig:design_overview}(b).  Considering a top-$k$ graph query $Q$.
In case that the query condition $\mathcal{K}_q=\langle u_1,t_3 \rangle$, the block is a non-match one and SP returns a non-existence proof $\{\langle n_2,h_1,h_6 \rangle, \langle n_3,h_4,h_5 \rangle \}$ to User.
As shown in Fig.\ref{fig:design_overview}(c), in case that $\mathcal{K}_q=\langle u_1,t_1 \rangle$, the block is a match one. SP returns the merkle proof of $n_1$ with $\mathcal{K}=\langle u_1,t_1 \rangle$ as an existence proof as well as top-$k$ hash items of $\mathcal{K}$. Compared with vChain, our design achieves lightweight ADS construction and deals with (non-)existence proofs in block-level.

\begin{figure}[t]
    \centering
    \includegraphics[width=0.98\linewidth]{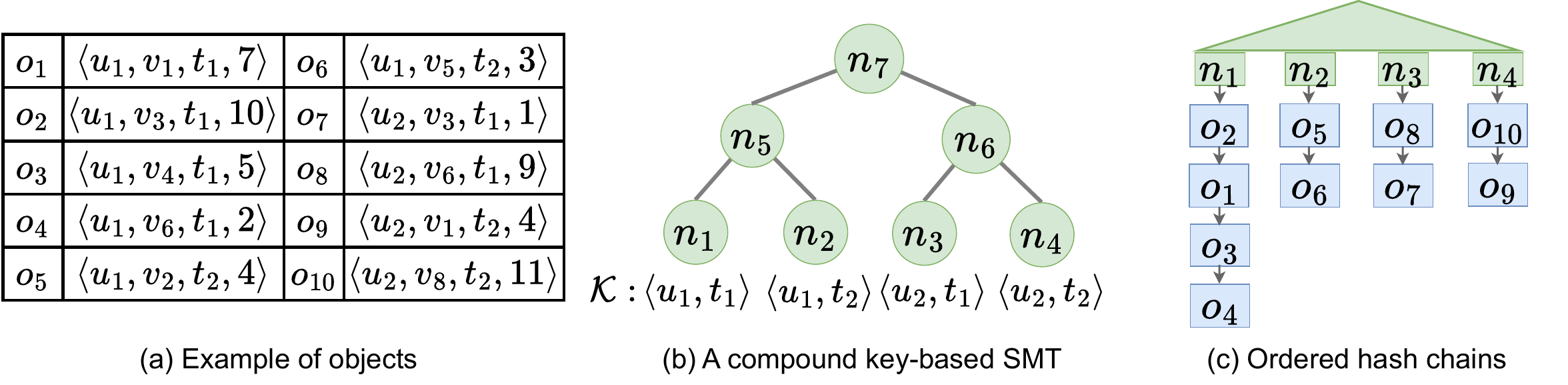}
    \caption{The ADS designs.}
    \label{fig:design_overview}
    \vspace{-2.5em}
\end{figure}

\noindent \textbf{Authenticated Two-layer Index.}
Now, we discuss the two-layer authenticated index construction and detail it in Alg.\ref{alg:netchain}.
Fig.\ref{fig:ADS} presents an example based on the case in Fig.\ref{fig:design_overview}.
We first describe the ordered hash chain construction.
In practice, due to the shared compound key $\mathcal{K}$, we simplify the chain item $ci$ to store a compound value $\mathcal{V}=\langle v,w \rangle$ and a hash pointer $ptr$ (lines 4-13), where the $ptr$ is computed by invoking \textsf{\footnotesize{Digest}}($\cdot$) defined in lines 53-54.
Particularly, $ptr$ in the last chain item is set as $\perp$.
As for SMT, the definitions of leaf nodes and internal nodes are different.
A leaf node $n$ has two fields: compound key $\mathcal{K}$, and hash pointer $ptr_h$ which is hash digest of the first chain item.
An internal node contains the child hashes, i.e., $hash_l$ and $hash_r$.
The hash digests of leaf node and internal node are computed as $H(\mathcal{K}||ptr_h)$ and $H(hash_l||hash_r)$ respectively.
Upon the SMT $T_s$ is built, its root hash $H_s$ is embedded into block header (line 16).

\begin{figure}[t]
    \centering
    \includegraphics[width=0.8\linewidth]{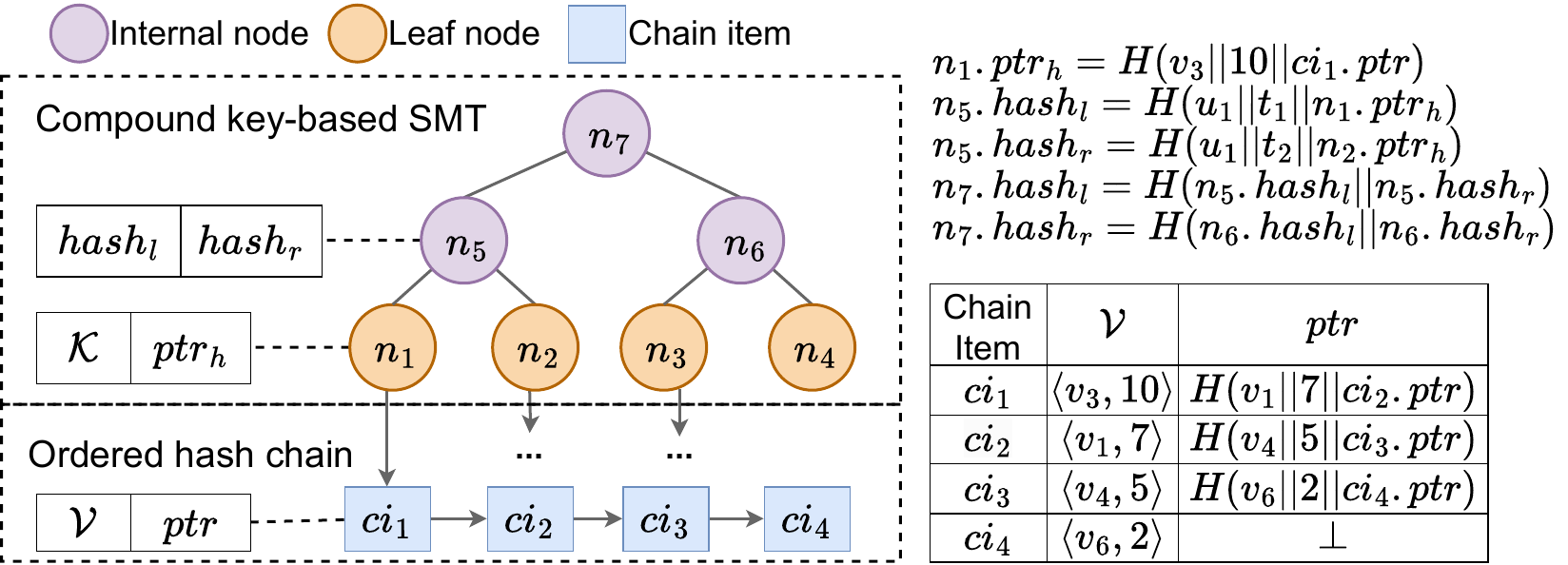}
    \caption{Authenticated two-layer index.}
    \label{fig:ADS}
\end{figure}    
    
\setlength{\textfloatsep}{0.1em}    
\begin{algorithm}[t]\scriptsize    
\SetNlSty{text}{}{:}       
\SetAlgoNoLine
\SetAlgoNoEnd

    \caption{NetChain}
    \label{alg:netchain}

    \begin{multicols}{2}
    
    \underline{\textbf{ADS Construction}:}\\
    \nl Init a set $S$ and a dictionary $D$\;
    \nl \textbf{for} each $\mathcal{K}$ in a transactions pool \textbf{do} \\
    \nl ~~~~Init an empty hash chain $C$\;
    \nl ~~~~\textbf{for} each object $o$ containing $\mathcal{K}$ \textbf{do} \\
    \nl ~~~~~~~~~~Extract $\mathcal{V}=\langle v,w \rangle$ from $o$\;
    \nl ~~~~~~~~~~Insert $(\mathcal{V},\perp)$ into $C$\;
        
    \nl ~~~~~Sort $C$ by $w$ in descending order\;
    \nl ~~~~~$l\leftarrow$ length of $C$, $j\leftarrow l-1$\;
    \nl ~~~~~\textbf{while} $j > 0$ \\
    \nl ~~~~~~~~~~$C[j].ptr$ $\leftarrow$ \textsf{Digest}($C[j+1]$)\; 
    \nl ~~~~~~~~~~$j\leftarrow j-1$\;

    \nl ~~~~~$D[\mathcal{K}] \leftarrow C$\; 
    \nl ~~~~~Add ($\mathcal{K}, \textsf{Digest}(C[1]$)) into $S$\;

    \nl Construct a SMT $T_s$ using leaf nodes in $S$\;
    \nl Write $T_s$ and $D$ into the block body\;
    \nl Set $H_{s}$ in block header as the root hash of $T_s$\;    
    
    \textbf{\underline{Search:}}\\

    \nl Parse query request $Q$ as $\langle u_q, type_q, k, [lb, ub] \rangle $;\\
    
    \nl $\mathcal{K}_q \leftarrow \langle u_q,type_q \rangle$, R$\leftarrow \emptyset$, VO$\leftarrow \emptyset$\;
    
    \nl \textbf{for} each block $blk_i$ with $i$ in $[lb,ub]$ \textbf{do}\\
    \nl ~~~~~$T_s, D \leftarrow blk_i$\; 
    \nl ~~~~~\textbf{if} $\mathcal{K}_q\in T_s$ \textbf{then} \\
    \nl ~~~~~~~~~~$\pi_i \leftarrow \textsf{ProExistence}(T_s,\mathcal{K}_q$)\;
    \nl ~~~~~~~~~~$C\leftarrow D[\mathcal{K}_q]$, $l\leftarrow$ length of $C$\;
    \nl ~~~~~~~~~~$r_i\leftarrow  C[1:\textsf{min}(l,k)]$\;
    \nl ~~~~~\textbf{else} \\
    \nl ~~~~~~~~~~$\pi_i \leftarrow \textsf{ProNonExistence} (T_s,\mathcal{K}_q)$, $r_i\leftarrow \perp$\;

    \nl ~~~~VO $\leftarrow \text{VO} \cup \pi_i$, $\text{R}\leftarrow \text{R}\cup r_i$\;
    \nl Send R,VO to the User\;

    \underline{\textbf{Verify:}}\\
    \nl $Res\leftarrow \emptyset$\;
    
    \nl \textbf{for} each $i$ in $[lb,ub]$ \textbf{do} \\
    \nl ~~~~~Retrieve $H_s$ from the block header of $blk_i$\;
    \nl ~~~~~$r_i \leftarrow \text{R}[i], \pi_i \leftarrow \text{VO}[i]$\;
    \nl ~~~~~\textbf{if} $r_i\neq \perp$ \textbf{then} \\
    \nl ~~~~~~~~~~\textbf{if} \rm{\textsf{VerExistence}}($H_s,\mathcal{K}_q, \pi$) $\neq 1$ \textbf{then}\\
    \nl ~~~~~~~~~~~~~~~Exit and report error\;
            
    \nl ~~~~~~~~~~Get $\mathcal{K},ptr_h$ from SMT leaf node in $\pi$\;
    \nl ~~~~~~~~~~\textbf{if} $\mathcal{K} \neq \mathcal{K}_q$ \textbf{then} \\
    \nl ~~~~~~~~~~~~~~~Exit and report $error$\;
    \nl ~~~~~~~~~~$l\leftarrow$ length of $r_i$\;
    \nl ~~~~~~~~~~\textbf{if} $l < k$ \rm{and} $r_i[l].ptr\neq \perp$ \textbf{then} \\
    \nl ~~~~~~~~~~~~~~~Exit and report $error$\;

    \nl ~~~~~~~~~~$j\leftarrow 1,ptr_{pre}\leftarrow ptr_h$\;
    \nl ~~~~~~~~~~\textbf{while} $j\leq l$ 
    \textbf{do} \\
    
    \nl ~~~~~~~~~~~~~~~$\mathcal{V},ptr \leftarrow r_i[j]$, $h\leftarrow \textsf{Digest} (r_i[j])$\;
    \nl ~~~~~~~~~~~~~~~\textbf{if} $h \neq ptr_{pre}$ \textbf{then} \\
    \nl ~~~~~~~~~~~~~~~~~~~~Exit and report $error$\;
            
    \nl ~~~~~~~~~~~~~~~$ptr_{pre}\leftarrow ptr$, $j\leftarrow j+1$\;
    \nl ~~~~~~~~~~~~~~~$Res\leftarrow Res \cup \mathcal{V}$\;
    \nl ~~~~~\textbf{else}\\
    \nl ~~~~~~~~~~\textbf{if} \rm{\textsf{VerNonExistence}}$(H_s,\mathcal{K}_q, \pi_i)\neq 1$ \textbf{then} \\
    \nl ~~~~~~~~~~~~~~~Exit and report error; \\

    \nl Select the global top-$k$ results from $Res$\;

\textbf{Function} \textsf{Digest}($ci$) \textbf{:} \tcc{$ci$ is a chain item}

\nl ~~~~~$h \leftarrow H(ci.\mathcal{V}.v || ci.\mathcal{V}.w || ci.ptr)$\;
\nl ~~~~~return $h$;\\

    \end{multicols}
\end{algorithm}

\noindent \textbf{Verifiable Query Process.}
Consider a query s.t., $Q=\langle u_q, type_q, k, [lb, ub] \rangle$.
Initially, SP generates two sets R and VO.
For each block $blk_i$ in time window, SP retrieves the SMT $T_s$ from block body, and determines the existence of $\mathcal{K}_q=\langle u_q,type_q \rangle$ (lines 20-21).
If $blk_i$ is a matched block, SP first generates existence proof $\pi_i$ for $\mathcal{K}_q$ (line 22).
Subsequently, SP accesses the ordered hash chain $C$ related to $\mathcal{K}_q$ and partitions top-$k$ chain items (i.e., $C[1:k]$) as $r_i$ (lines 23-24).
Otherwise, when $blk_i$ is a non-matched block, SP generates the non-existence proof $\pi_i$ for $\mathcal{K}_q$ and set $r_i$ as $\perp$ (line 26).
Finally, $r_i$ and $\pi_i$ are inserted into R and VO respectively (line 27).
When finishing the processes for all blocks in the time window, R and VO are returned to User as the response.
For simplicity, here we omit the analysis on case that the count of intra-block matched objects is less than $k$, but it is easy to fix as shown in line 24.


In verification stage, for each $i$ in time window, User first retrieves $H_s$ from $blk_i$'s block header (line 31).
In the case that $blk_i$ is claimed as a matched block (line 33), User first verifies the existence of $\mathcal{K}_q$ (lines 34-35).
Meanwhile, User extracts $\mathcal{K}$ from the SMT leaf node in $\pi_i$ and checks if $\mathcal{K}=\mathcal{K}_q$ (lines 36-38).
After that, User checks the $r_i$ integrity (lines 39-48).
For each chain item $r_i[j]$ in $r_i$, User iteratively computes its hash digest and compares it with $r_i[j-1].ptr$ (specially, integrity of $r_i[1]$ is checked using $ptr_h$ in SMT leaf node from $\pi_i$).
As for the case that $blk_i$ is claimed as a non-matched block, User only verifies the non-existence of $\mathcal{K}_q$ (lines 49-51).
When the response passes verification, User selects the global top-$k$ objects from R as results.


\emph{Example.}
Fig.\ref{fig:example1} shows a blockchain with hundreds of blocks following the example in Fig.\ref{fig:design_overview}, while the ADS of $blk_5$ is presented in Fig.\ref{fig:ADS}.
Assume the query condition is $\{u_1,t_1, 3, 0, 299 \}$ and denote $\mathcal{K}_1$ as $\langle u_1,t_1 \rangle$.
There are three matched blocks in time window: $blk_5,blk_{73},blk_{219}$.
Corresponding $r_i$ of these matched blocks are marked by red box.
Consider the authentication for $blk_5$.
The proof $\pi_5$ is set as $\{n_1, h_2, h_6 \}$, and $r_5$ is set as $\{ci_1, ci_2, ci_3\}$.
In verification stage, for $blk_5$, User reconstructs the SMT root hash s.t. $H((H(n_1)||h_2)||h_6)$, and compares it with $H_s$.
After that, User gets $ptr_h$ in $n_1$ from $\pi_5$, and compares it with \textsf{\footnotesize{Digest}}($ci_1$) to ensure the integrity of $ci_1$.
Then, User iteratively verifies the remaining chain items using embedded $ptr$.
Finally, User obtains global top-$3$ result as $\{ \langle v_2,18 \rangle,\langle v_8,15 \rangle, \langle v_{13},11 \rangle \}$.


\begin{figure}[t]
    \centering
    \includegraphics[width=0.8\linewidth]{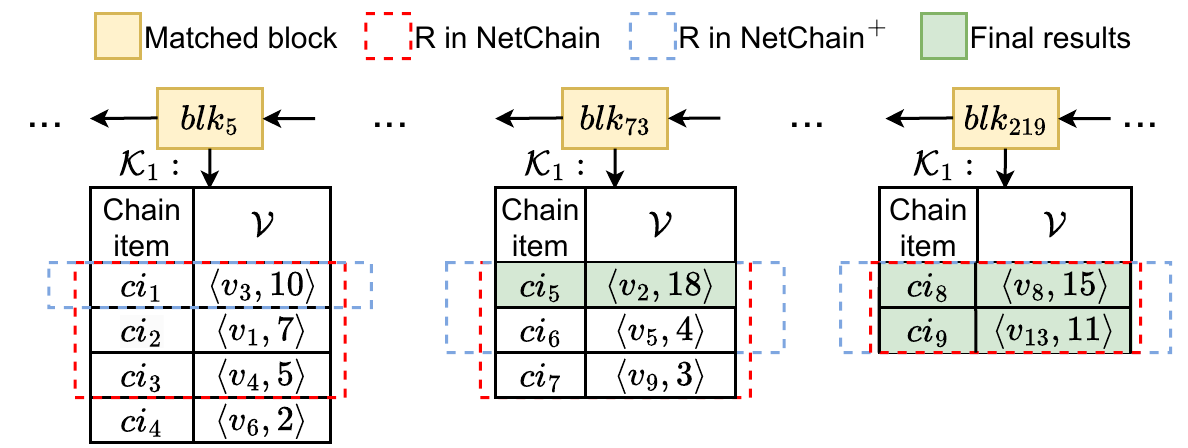}
    \caption{Global data distribution.}
    \label{fig:example1}
\end{figure}

\section{NetChain$^{+}$: Optimized Authentication Cost}
NetChain achieves authenticated top-$k$ graph query on blockchain.
However, it still suffers from high authentication cost (i.e., communication and verification cost) for two reasons. Considering a query $Q=\langle \mathcal{K}_q,k,[lb,ub] \rangle$. Since many blocks within [$lb$, $ub$] could be non-matched ones, SP have to generate and add many non-existence proofs into VO for User to verify.
In addition, SP inserts top-$k$ objects of each matched block into R, causing the size of R much larger than $k$.
Thus, we attempt to minimize communication and verification cost of NetChain.
In this section, we propose NetChain$^{+}$, where we devise (i) a verifiable inter-block link construction to reduce the authentication cost for VO, and (ii) a two-round scan mechanism to reduce the authentication cost for R.

\setlength{\textfloatsep}{0.1em}    
\begin{algorithm}[t]\scriptsize    
\SetNlSty{text}{}{:}       
\SetAlgoNoEnd       

    \caption{NetChain$^{+}$}
    \label{alg:optimization}

    \begin{multicols}{2}
    \underline{\textbf{ADS Construction:}}\\
    \nl Init a set $S$ and a dictionary $D$\;
    \nl \textbf{for} each $\mathcal{K}$ in a transactions pool \textbf{do}\\
        \nl ~~~~ Lines 3-12 in Alg.\ref{alg:netchain}, $id_{pre}\leftarrow -1$\;
        \nl ~~~~ \textbf{if} $\mathcal{K}\in T_m$ \textbf{then} \\
        \nl ~~~~~~~~ $id_{pre},\pi_{mpt} \leftarrow T_m$.\textsf{Get}($\mathcal{K}$)\;
        
        \nl ~~~~ Add $(\mathcal{K}, \textsf{Digest}(C[1]),id_{pre})$ into $S$\;
        \nl ~~~~ $T_m$.\textsf{Set}($\mathcal{K}$, current block $id$)\;

    \nl Lines 14-16 in Alg.\ref{alg:netchain}\;
    \nl Set $H_m$ in block header as the root hash of $T_m$\;

    \underline{\textbf{Search:}}\\
    \nl Lines 17-18 in Alg.\ref{alg:netchain}\;
    \nl $i,\pi_{mpt} \leftarrow T_m$.\textsf{Get}$(\mathcal{K})$\;
    \nl \textbf{if} $i \leq ub$ \textbf{then}\\
        \nl ~~~~ $a \leftarrow i, b\leftarrow i$, $\text{VO}\leftarrow \text{VO}\cup \pi_{mpt}$\;
        \nl \textbf{else}\\
        \nl ~~~~ \textbf{while} $i>ub$ \textbf{do}\\
            \nl ~~~~~~~~ $T_s \leftarrow blk_{i}$\;
            \nl ~~~~~~~~ $n \leftarrow$ leaf node in $T_s$ containing $\mathcal{K}_q$\;
            \nl ~~~~~~~~ \textbf{if} $n.id_{pre} \leq ub$ \textbf{then}\\
                \nl ~~~~~~~~~~~~ $a\leftarrow n.id_{pre}, b\leftarrow i$\;
                \nl ~~~~~~~~~~~~ $\pi_b \leftarrow$ \textsf{ProExistence}($T_s,\mathcal{K}_q$)\;
                \nl ~~~~~~~~~~~~ $\text{VO}\leftarrow \text{VO} \cup \pi_b$, break\;
            
            \nl ~~~~~~~~ $i\leftarrow n.id_{pre}$\;

    \nl Init a set $I$, $i\leftarrow a$\;
    \nl \textbf{while} $i$ \rm{in} $[lb,ub]$ \textbf{do}\\
        \nl ~~~~ $T_s,D \leftarrow blk_i$, $C\leftarrow D[\mathcal{K}_q]$\;
        \nl ~~~~ $n\leftarrow$ leaf node in $T_s$ containing $\mathcal{K}_q$\;
        \nl ~~~~ Add all $\mathcal{V}$ in $C$ into $I$, $i \leftarrow n.id_{pre}$\;
    
    \nl Select top-$k$ $\mathcal{V}$ from $I$ as R$_Q$, $i\leftarrow a$\;

    \nl \textbf{while} $i$ \rm{in} $[lb,ub]$ \textbf{do}\\
        \nl ~~~~ $T_s,D \leftarrow blk_i$\;
        \nl ~~~~ Lines 22-23 in Alg.\ref{alg:netchain}, $j\leftarrow 1$\;
        \nl ~~~~ \textbf{while} $j\leq l$ \textbf{do}\\
            \nl ~~~~~~~~ \textbf{if} $C[j].\mathcal{V}\notin \text{R}_Q$ or $j=l$ \textbf{then}\\
                \nl ~~~~~~~~~~~~ $r_i \leftarrow C[1:j]$, break\;
            
            \nl ~~~~~~~~ $j\leftarrow j+1$\;
        
        \nl ~~~~ Line 27 in Alg.\ref{alg:netchain}\;
        \nl ~~~~ $n\leftarrow$ leaf node in $T_s$ containing $\mathcal{K}_q$\;
        \nl ~~~~ $i \leftarrow n.id_{pre}$\;
    
    \nl Send $\text{R}, \text{VO}, b$ to the User\;

    \underline{\textbf{Verify:}}\\

    \nl Find top-$k$ $\mathcal{V}$ from R as $Res$\;

    \nl \textbf{if} $b \leq ub$ \textbf{then}\\
        \nl ~~~~ Get $\pi_{mpt}$ from VO\;
        \nl ~~~~ Retrieve $H_{m}$ from latest block header\;
        \nl ~~~~ \textbf{if} \textsf{KVcheck}$(H_m, \mathcal{K}_q, b, \pi_{mpt}) \neq 1$ \textbf{then}\\
            \nl ~~~~~~~~ Exit and report \emph{error}\;
        \nl ~~~~ $a\leftarrow b$\;
    \nl \textbf{else}\\
    
        \nl ~~~~ Get leaf node $n$ in $\pi_b$\;
        \nl ~~~~ Retrieve $H_s$ from the block header of $blk_b$\;
        \nl ~~~~ \textbf{if} \textsf{VerExistence}$(H_s, \mathcal{K}_q,\pi_b)\neq 1$ or $n.id_{pre}>ub$ \textbf{then}\\
        \nl ~~~~~~~~Exit and report \emph{error}\;

        \nl ~~~~ $a\leftarrow n.id_{pre}$\;
    
    \nl $i\leftarrow a$\;
    \nl \textbf{while} $i$ in $[lb,ub]$ \textbf{do}\\             
        \nl ~~~~ Lines 31-32 in Alg.\ref{alg:netchain}\;
        \nl ~~~~ \textbf{if} \textsf{VerExistence}$(H_s,\mathcal{K}_q,\pi_i)\neq 1$ \textbf{then}\\
        \nl ~~~~~~~~ Exit and report \emph{error}\;
        
        \nl ~~~~ Get $\mathcal{K},ptr_h,id_{pre}$ from leaf node in $\pi_i$\;
        
        \nl ~~~~ Lines 37-38 in Alg.\ref{alg:netchain}\;
        \nl ~~~~ $j\leftarrow 1$, $ptr_{pre}\leftarrow ptr_h$, $l\leftarrow$ length of $r_i$\;

        \nl ~~~~ \textbf{while} $j\leq l$ \textbf{do}\\
            \nl ~~~~~~~~ Lines 44-46 in Alg.\ref{alg:netchain}\;

            \nl ~~~~~~~~ $flag_1 \leftarrow j<l'$ and $\mathcal{V} \in Res$\;
            \nl ~~~~~~~~ $flag_2 \leftarrow j=l' $ and $ \mathcal{V}\notin Res$\;
            \nl ~~~~~~~~ $flag_3 \leftarrow j=l' $ and $ \mathcal{V} \in Res$ and $ptr=\perp$\;
            \nl ~~~~~~~~ \textbf{if} $flag_1 $ \rm{or} $ flag_2 $ or $ flag_3 \neq 1$ \textbf{then} \\
                \nl ~~~~~~~~~~~~ Exit and report $error$\;
            
            \nl ~~~~~~~~ $ptr_{pre}\leftarrow ptr$, $j\leftarrow j+1$\;
        \nl ~~~~ $i\leftarrow id_{pre}$\;

    \nl Take $Res$ as the final top-$k$ results\;
    
    \end{multicols}
\end{algorithm}

\noindent{\underline{\emph{Verifiable Inter-block Link Construction.}}}
Since NetChain lacks aggregation across blocks, SP has to generate (non-)existence proof per block, incurring significant authentication overhead for many non-matched blocks.
To address this issue, we propose a verifiable inter-block link construction to eliminate the authentication for non-matched blocks.
Our idea is to link blocks containing the same compound key in a verifiable way.
As a result, SP only needs to generate existence proofs for matched blocks within [$lb$, $ub$] and adds them into VO. On the other hand, User can verify 
that existence proofs of all the matched blocks are correctly returned via the linkage of $\mathcal{K}_q$ among them.

To establish linkages among blocks, considering a newly added block $blk_i$.  For each  $\mathcal{K}$ in $blk_i$, we first retrieve the preceding block $blk_j$ containing $\mathcal{K}$, and establishes an inter-block link from $blk_i$ to $blk_j$.
To do so, $j$ is embedded into the SMT leaf node bind to $\mathcal{K}$ in $blk_i$ as a link.
For example, as shown in Fig.\ref{fig:example2}, there is a link between $blk_{219}$ and $blk_{73}$ due to the common owned $\mathcal{K}_1$, and 73 is stored in the corresponding SMT leaf node in $blk_{219}$.
We keep a global merkle Patricia trie (MPT) $T_m$ within the blockchain system (maintained by all participants except the User) as an authenticated key-value store, where each key-value pair contains a compound key $\mathcal{K}$ and $id$ of the block that $\mathcal{K}$ occurs most recently.
E.g, the value for $\mathcal{K}_1$ in MPT is updated to 301 once $blk_{301}$ is newly added.
In this way, SP can directly retrieve all matched blocks fulfilling $\mathcal{K}_q$ within [$lb$, $ub$], and generates existence proofs for them.
For example, as shown in Fig.\ref{fig:example2}, if the query condition is $\mathcal{K}_1$, SP first searches MPT and learns that the latest matched block is $blk_{301}$. 
Starting from $blk_{301}$, SP can quickly traverse all the other matched blocks within the time window via links of $\mathcal{K}_1$ among them, and generates existence proofs as $\{\pi_{219},\pi_{73}, \pi_{5}\}$.
Moreover, to guarantee the result \emph{completeness}, the existence proof for the out boundary matched block is also added into VO, e.g., $\pi_{301}$.
Obviously, this design eliminates the authentication cost for non-matched blocks.

\begin{figure}[t]
    \centering
    \includegraphics[width=0.6\linewidth]{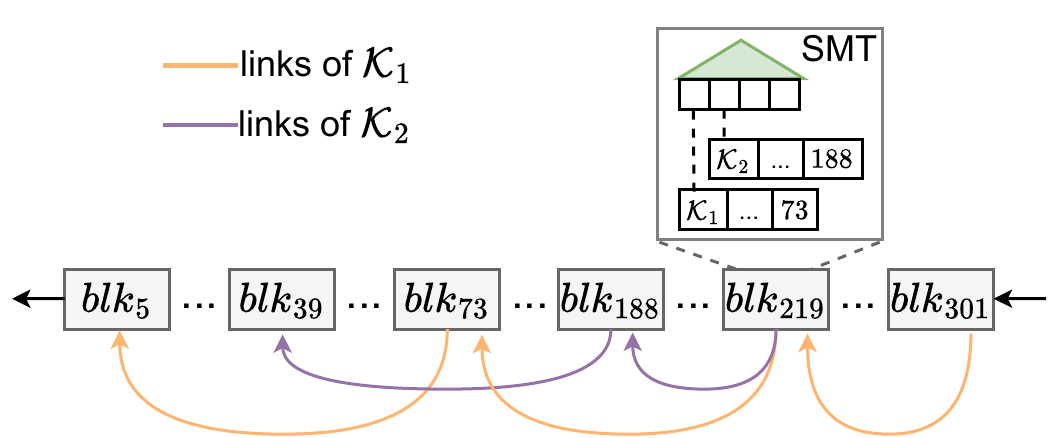}
    \caption{The inter-block index.}
    \label{fig:example2}
\end{figure}

\noindent{\underline{\emph{Two-round Scan Mechanism.}}}
Assume there are $N$ matched blocks in the time window [$lb$, $ub$].
The original design of Netchain requires SP to return local top-$k$ chain items for each of the $N$ matched blocks, leading to $O(N \times k)$ size of R, while the optimal size of R in ideal should be $O(k)$.
To address this issue, we propose a two-round scan mechanism, which returns the global top-$k$ chain items with a small set of $N$ additional chain items to achieve approximately $O(k)$ size of R.
In detail, SP first scans over all the matched blocks with $id$ in [$lb$, $ub$], and obtains the global top-$k$ chain items (1st round of scan). 
With the $k$ chain items, SP scans over all these matched blocks again to record how many valid chain items each matched block contains (2nd round of scan), where we define the valid chain item as the one exists in global top-$k$ results.
For example, as shown in Fig.\ref{fig:example1}, given the query condition $\mathcal{K}_1$, $blk_5$, $blk_{73}$ and $blk_{219}$ contributes 0, 1 and 2 valid chain items (marked by green color) to the global top-$k$ results. 


Apart from the integrity of chain items, User also needs to ensure that none of valid chain items is missed.
Directly returning all valid chain items will break the result \emph{completeness} since a malicious SP may miss some valid chain items, e.g., only returning $ci_8$ as the valid chain item $blk_{219}$ contains in Fig.\ref{fig:example1}.
Therefore, we need to return the successor chain item after those valid chain items additionally per block as the out boundary chain item, e.g., $ci_6$ in $blk_{73}$ is the out boundary chain item when querying for $\mathcal{K}_1$.

\noindent \textbf{Index Construction.} The detail of NetChain$^+$ is shown in Alg. \ref{alg:optimization}.
Denote $id_{pre}$ as $id$ of the preceding block containing the same $\mathcal{K}$.
The index construction procedure of NetChain$^{+}$ is basically the same with that of NetChain except for the inter-block links construction.
For each $\mathcal{K}$ in $blk_i$, Miner retrieves the local $T_m$ for $\mathcal{K}$ and gets the value $i'$ (lines 4-5).
Then Miner generates a leaf node associated with $\mathcal{K}$, the $id_{pre}$ of which is set as $i'$ (line 6).
After that, the value for $\mathcal{K}$ in $T_m$ is updated as $i$ (line 7).
Finally, the up to date $H_m$ is embedded into block header (line 9).

\noindent \textbf{Query Process.}
Denote $blk_a$, $blk_b$ as the right boundary chain item and the out boundary chain item. For example, when quering for $\mathcal{K}_1$ in the time window $[0,300]$, $blk_{219}$ and $blk_{301}$ are the right boundary and out boundary chain item, respectively.
In search stage, SP first determines the right boundary block $blk_a$ and the out boundary block $blk_b$, and generates proof $\pi_b$/$\pi_{mpt}$ (lines 11-22). 
Next, SP executes the first round of scan and obtains global top-k chain items R$_Q$ (lines 24-28). 
Then the second round of scan is executed. 
For each matched block, SP combines all valid chain items with the out boundary chain item as $r_i$ (lines 32-35), and generates the existence proof $\pi_i$. 
Finally, SP sends $\{\text{R,VO},b\}$ to User. 

In verification stage, User initially obtains top-$k$ results $Res$ from R. 
Next, User verifies the correctness of $b$.
In case that $b\leq ub$, User accesses the latest block header, from which gets $H_m$. 
Then User uses $H_m$ and $\pi_{mpt}$ to verify if the value for $\mathcal{K}_q$ in $T_m$ is actually $b$ (lines 44-45).
As for the case that $b>ub$, User gets $H_s$ from block header of $blk_b$, then simply verifies that $\mathcal{K}_q$ is indeed in $blk_b$ using $H_s$ and $\pi_b$.
Meanwhile, User makes sure that the preceding matched block of $blk_b$ is the right boundary block (lines 47-51). 
Once the verification for $b$ successes, User verifies R and VO. 
Initiate $i$ as $a$. 
In each iteration, User uses $\pi_i$ to verify the existence of $\mathcal{K}_q$ (lines 56-57).
Subsequently, $r_i$ is examined, including the verification for chain items integrity and validity (lines 63-67).
Note that $flag_1, flag_2, flag_3$ correspond to the three types of legal chain items. 
Finally, update $i$ as $id_{pre}$ embedded in the leaf node in $\pi_i$ (line 69).
This process recurs until $i<lb$.
If all verification success, User takes $Res$ as the correct final results.

\section{Security Analysis}


\begin{theorem}\label{thm1}
Following the definition of Unforgeability in \cite{conference:vchain}, NetChain satisfies the Unforgeability security property.
\end{theorem}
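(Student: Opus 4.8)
The plan is to establish Unforgeability by a reduction to the collision resistance of the cryptographic hash function $H(\cdot)$ recalled in the Preliminaries. Following the vChain definition, I set up the security game in which a PPT adversary $\mathcal{A}$ (the malicious SP) fixes the blockchain so that every block header carries the honestly computed root hash $H_s$, and then for an adaptively chosen query $Q=\langle u_q,type_q,k,[lb,ub]\rangle$ outputs a forged pair $(\text{R}^{*},\text{VO}^{*})$. The adversary wins if the User's \textsf{Verify} procedure (Alg.~\ref{alg:netchain}, lines 29--52) accepts $(\text{R}^{*},\text{VO}^{*})$ yet the returned result differs from the true top-$k$ result $\text{R}$. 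I would construct a collision finder $\mathcal{B}$ that runs $\mathcal{A}$ and, whenever $\mathcal{A}$ wins, extracts two distinct preimages mapping to the same $H$-digest, so that any non-negligible winning probability contradicts collision resistance.

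I would then decompose a winning forgery into the three failure modes of the threat model and handle each. \emph{Soundness/Correctness:} if a tampered or non-existent compound value $\langle v,w\rangle$ appears in some $r_i^{*}$ and still passes verification, then since \textsf{Verify} re-computes \textsf{Digest} item by item along the chain and matches it against the pointer sequence anchored at $ptr_h$ --- which itself lies in an SMT leaf bound to the honest $H_s$ through \textsf{VerExistence} (lines 34, 44--46) --- the discrepancy yields either two distinct chain items with identical \textsf{Digest} output or two distinct SMT leaves with the same hash, i.e.\ an $H$-collision. \emph{False non-match:} if $\mathcal{A}$ labels a genuinely matched block as non-matched, the accepted non-existence proof consists (by the SMT construction) of merkle proofs for the two leaves bracketing $\mathcal{K}_q$; but the honest sorted tree contains a leaf for $\mathcal{K}_q$ strictly between those boundaries, so two different leaf sequences reconstructing the same root $H_s$ again force an $H$-collision. \emph{Chain truncation:} if $\mathcal{A}$ drops a valid higher-weight item inside a matched block, the descending-weight ordering together with the hash-pointer linkage and the boundary check at lines 40--41 (error unless either $k$ items are returned or the last pointer is $\perp$) guarantees that the omission breaks the pointer chain, once more producing an $H$-collision.

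The main obstacle I anticipate is the completeness argument for the ordered hash chain (the third case). I must show that the descending-weight ordering together with the single embedded head pointer $ptr_h$ rigidly pins down the prefix of top-$k$ items, so that no valid item can be silently skipped or reordered without violating either the per-item \textsf{Digest} equality (lines 44--46) or the terminal condition (lines 40--41). The delicate step is a ``first point of divergence'' argument: a forged chain prefix satisfying every \textsf{Digest} check must coincide with the honest prefix up to the first differing item, and at that item the two chains expose a collision; one must additionally rule out the corner case where a returned length $l<k$ legitimately signals chain exhaustion (handled by the $r_i[l].ptr=\perp$ test). By contrast, the first two cases reduce to the collision resistance of the SMT in the standard way once the leaf-to-root binding via $H_s$ is made explicit. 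Summing the negligible collision probabilities over the three cases and over the $O(ub-lb)$ blocks in the window yields a negligible overall forgery advantage, completing the reduction.
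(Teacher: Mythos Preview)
Your proposal is correct and follows essentially the same route as the paper: a case analysis over the soundness, correctness, and completeness failure modes, each reduced to a collision of $H(\cdot)$. Your write-up is considerably more formal---you phrase it as an explicit reduction via a collision-finder $\mathcal{B}$ and spell out the ``first point of divergence'' argument for the ordered hash chain---whereas the paper gives only a short informal contradiction sketch (and in particular dispatches the Correctness case trivially by noting that the verifier re-filters the results itself, rather than invoking collision resistance), but the underlying decomposition and the key observations coincide.
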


\begin{proof}
Before the proof, we let $\{o_i\}$ denote the set of real objects in the blockchain and the $\text{R}_Q$ refer to the genuine result. This theorem is proved by contradiction. 


\underline{Case 1 (Soundness): R contains an object $o'$ s.t. $o'\notin \{o_i\}$.}
Upon receiving R and VO, the verifier first examines the proofs in VO, then checks the matched objects in R.
There may be two possible conditions in this case:
(i) An adversary $\mathcal{A}$ successfully proves $o'$ exists in a non-matched block, indicating that a forged merkle proof is generated by finding the collision of cryptographic hash function.
(ii) $\mathcal{A}$ successfully replaces a matched object in R with a fake one, meaning that $\mathcal{A}$ can find a collision of hash function.
In summary, this case contradicts the security of cryptographic hash function.

\underline{Case 2 (Correctness): R contains an object $o'$ s.t. $o'\in \{o_i\}$, but $o'\notin \text{R}_Q$.} Apparently, this case is impossible since the verifier would check all objects in R, and filter the final results by itself.

\underline{Case 3 (Completeness): There is an object $o'\in \text{R}_Q$ in $\{o_i\}$, but $o'\notin $R.} There may be two possible conditions in this case:
(i) $\mathcal{A}$ successfully generates non-existence proof for a matched block.
(ii) $\mathcal{A}$ successfully proves an ordered hash chain $C$ contains less then $k$ objects, where $C$ has at least $k$ objects in fact.
Both of these conditions indicating that $\mathcal{A}$ finds a collision of cryptographic hash function, which leads to a contradiction.
\end{proof}

\begin{theorem}
NetChain framework satisfies the Unforgeability security property.
\end{theorem}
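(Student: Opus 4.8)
The plan is to prove the statement by contradiction, reducing any successful forgery to a collision of the cryptographic hash function $H(\cdot)$, whose collision-resistance is assumed in the Preliminaries. Following the Unforgeability definition of vChain, I would decompose Unforgeability into the three guarantees of the threat model --- Soundness, Correctness, and Completeness --- and argue that a malicious SP $\mathcal{A}$ producing a response $\langle \text{R},\text{VO}\rangle$ that passes the verification of Alg.~\ref{alg:netchain} must violate exactly one of them, with each violation yielding a hash collision. The overall template is thus identical to a classical authenticated-query argument, but the two-layer index forces the collision to be exhibited at whichever layer (the compound-key SMT or an ordered hash chain) the deviation occurs.

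For Soundness I would assume $\text{R}$ contains an object $o'$ that is not a genuine on-chain object. Since the verifier accepts, either $o'$ is claimed to live in a block whose SMT does not index the matching compound key $\mathcal{K}_q$, so the membership proof $\pi_i$ must recompute to the honest root $H_s$ stored in the block header despite a substituted leaf, forcing a collision along the authentication path; or $o'$ replaces a genuine chain item, so its recomputed \textsf{Digest} still equals the pointer $ptr_{pre}$ inherited (ultimately) from the leaf's $ptr_h$ at line~36, again a collision (lines~44--45). For Correctness I would simply observe that the verifier recomputes the global top-$k$ from the union of all verified local results (line~52), so a genuine-but-non-top-$k$ object present in $\text{R}$ is discarded and cannot corrupt the output; this case needs no cryptographic assumption and follows from the verifier's own filtering.

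The Completeness case is where I expect the main difficulty, and I would split it into two sub-cases mirroring the two layers. First, $\mathcal{A}$ might hide an entire matched block by returning a non-existence proof for $\mathcal{K}_q$: by the SMT property this requires two adjacent sorted leaves bracketing $\mathcal{K}_q$ whose paths reconstruct the honest $H_s$, infeasible without a collision since $\mathcal{K}_q$ genuinely occupies a leaf. Second --- the subtler point --- $\mathcal{A}$ might return a truncated chain, claiming a block contributes fewer than $k$ items when it holds at least $k$. Here I would combine the descending-weight sortedness of the chain with the terminal-pointer check at lines~40--41: a partial chain of length $l<k$ is accepted only when its last item carries $ptr=\perp$, i.e. is provably the tail; since the honest tail pointer is fixed by \textsf{Digest} and rooted to the verified $ptr_h$, forging a premature $\perp$ (or dropping an interior item while keeping the pointer chain intact) again forces a collision. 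Assembling the three cases, every deviation contradicts collision-resistance, so no PPT adversary forges an accepted-yet-incorrect response with non-negligible probability, establishing Unforgeability for NetChain. The chain-truncation sub-case is the step I would treat most carefully, as it is the one genuinely specific to the ordered-hash-chain design rather than to the classical MHT/SMT arguments.
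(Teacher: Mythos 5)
Your proof is internally correct, but it proves the wrong theorem: it establishes unforgeability for the base NetChain scheme of Alg.\ref{alg:netchain}, which is exactly the content of the paper's Theorem~\ref{thm1} (your three cases, including the chain-truncation sub-case resolved via the terminal-$\perp$ check of lines 40--41, mirror that proof almost verbatim, and in somewhat more detail). The statement you were given is the paper's \emph{second} theorem, whose wording contains a typo: its proof opens with ``NetChain$^{+}$ is derived from NetChain\dots'' and is devoted to the optimized scheme of Alg.\ref{alg:optimization}. The paper does reuse the Theorem~\ref{thm1} arguments for Soundness and Correctness, so those parts of your write-up transfer, but the Completeness case for NetChain$^{+}$ is materially different and your proposal does not cover it.

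Two NetChain$^{+}$-specific attack surfaces are missing. First, NetChain$^{+}$ generates no per-block non-existence proofs: completeness of block coverage rests on the verified chain of $id_{pre}$ inter-block links, anchored at the out-boundary block $b$ (checked via \textsf{KVcheck} against $H_m$ when $b\leq ub$, or via $\pi_b$ together with the condition $n.id_{pre}\leq ub$ otherwise). Hiding a matched block would require forging an $id_{pre}$ field inside an SMT leaf authenticated under $H_s$, or forging the MPT proof --- still a hash collision, but a different one from your SMT-adjacency argument, which addresses a proof type that no longer exists in Alg.\ref{alg:optimization}. Second, under the two-round scan, $r_i$ is legitimately shorter than the local top-$k$ (it consists of the valid items plus one out-boundary item, policed by the $flag_1,flag_2,flag_3$ conditions), so your terminal-$\perp$ truncation argument does not apply: the dangerous truncation is $\mathcal{A}$ declaring a genuinely valid chain item to be the out-boundary item and silently dropping the valid items behind it. Crucially, the paper's condition (iii) refutes this \emph{without} collision resistance, by a plain consistency observation: the verifier recomputes the global top-$k$ from R, and the misdeclared item's weight would exceed that of some item the SP claims valid, exposing the inconsistency during filtering. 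This non-cryptographic step is the genuinely new ingredient of the NetChain$^{+}$ proof, and it is absent from your proposal.
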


\begin{proof}
NetChain$^{+}$ is derived from NetChain, and the authentication method for \emph{soundness} and \emph{correctness} is similar to that of NetChain. Therefore, the Case 1 and Case 2 in NetChain$^{+}$ is infeasible due to the same reasons in Theorem \ref{thm1}.

\underline{Case 3 (Completeness): There is an object $o'\in \text{R}_Q$ in $\{o_i\}$, but $o'\notin $R.}
There are three conditions in this case:
(i) $\mathcal{A}$ successfully generates non-existence proof for a matched block.
(ii) $\mathcal{A}$ misses $o'$ from corresponding R.
(iii) $\mathcal{A}$ takes $o'$ as the out boundary object, while $o'$ is actually a valid object.
The first two conditions means that $\mathcal{A}$ finds a collision of cryptographic hash function, which leads to a contradiction.
The third condition is impossible since the verifier will filter the final results from R, where it is revealed that the weight of $o'$' is larger than that of some claimed valid objects, causing a contradiction.
\end{proof}

\begin{table}[t]
\scriptsize
\setlength{\abovecaptionskip}{0.cm}
	\caption{Dataset statistics.}
	\label{tab:dataset}
	\begin{tabular}{c|c|c|c|c|c}
		\hline 
		\textbf{Dataset}& \textbf{Nodes}  & \textbf{Edges} & \textbf{Edge type} & \textbf{Graph type} & \textbf{Source link} \\
		\hline
        Email & 36,692 & 183,831 & Friendship & Undirected & snap.stanford.edu/data/email-Enron.html\\
        Wiki & 7,115 & 103,689 & Voting & Directed & snap.stanford.edu/data/wiki-Vote.html\\
        GPlus & 107,614 & 13,673,453 & Share & Directed & snap.stanford.edu/data/ego-Gplus.html\\
		\hline
	\end{tabular}
\end{table}
\begin{table}[t]
    \setlength{\abovecaptionskip}{0pt} 
    \setlength{\belowcaptionskip}{5pt} 
    \centering
    \caption{ADS Construction Cost of Miner.}
    \label{tab:setup_cost}
    \begin{tabular}{cc@{\hspace{5pt}}c@{\hspace{5pt}}c@{\hspace{5pt}}c@{\hspace{5pt}}c@{\hspace{5pt}}c@{\hspace{5pt}}c@{\hspace{5pt}}c}
        \toprule
        \multirow{2}{*}{Dataset} & \multicolumn{2}{c}{NetChain} &  \multicolumn{2}{c}{NetChain$^{+}$}  &  \multicolumn{2}{c}{vChain\textsubscript{1}}  & \multicolumn{2}{c}{vChain\textsubscript{2}}  \\
        \cline{2-9}
         & T & S & T & S & T & S & T & S\\
         \hline
         Wiki & 0.00028 & 5.62 & 0.00057 & 5.68 & 0.00689 & 16.05 & 0.01509 & 25.04\\
         Email & 0.00084 & 8.82 & 0.00198 & 8.93 & 0.02766 &  25.08 & 0.06139 & 42.23\\
         GPlus & 0.00164 & 35.68 & 0.00579 & 35.94 & 0.08649 & 111.30 & 0.14017 & 211.36\\
         \bottomrule

    \end{tabular}
    
    \vspace{0.1em} 
    \par T: ADS construction time (s/block) \quad S: ADS size (KB/block)
\end{table}

\vspace{-1em}
\section{Experimental Evaluation}


\textbf{Experiment Settings.}
In the experiments, we evaluate the performance of NetChain and NetChain$^+$, and compare them with the state-of-the-art on-chain authenticated query framework vChain \cite{conference:vchain}. Note that, we denote vChain\textsubscript{1} and vChain\textsubscript{2} as frameworks with intra-block index only and both intra-block index and inter-block index, respectively.
All considered above frameworks are implemented in about 5k LOCs of C++\footnote{Our code: https://anonymous.4open.science/r/NetChain-4230/}, where we utilize SHA256 in OpenSSL\footnote{\textcolor{black}{The code of OpenSSL: https://openssl-library.org/}} as the cryptographic hash function.
All parties are deployed on a server with Intel(R) Core(TM) i7-10700 CPU@2.60GHz and 80GB RAM, running on Ubuntu 18.04.
Three real-world graph datasets Email, Wiki and GPlus are used as shown in Table.\ref{tab:dataset}. All experiments were repeated 20 times and the average is reported.

\noindent\textbf{Performance Evaluation.} In our experiments, we default to set the size of skiplists (inter-block index) as 5, according to the recommendation of vChain \cite{conference:vchain} due to the
balance between the storage cost and query performance. Then, we set (i) $k=20$ in search query by default,
(ii) for Wiki and Email dataset, the time window size increases from 200 to 1000, 
(iii) for GPlus dataset, the time window size increases from 5K to 25K by default unless otherwise specified.

\noindent{\underline{\emph{ADS Construction Performance.}}} We first evaluate the construction time and storage costs of ADS construction of all the considered frameworks, while the count of objects per block is set as 100, 150, 500 for three datasets, respectively. 
As we can see from Table.\ref{tab:setup_cost}, NetChain has the minimal construction time for every dataset, followed by NetChain$^+$. 
That is because vChain\textsubscript{1} requires expensive accumulator operations in ADS cosntruction and vChain\textsubscript{2} needs to additionally construct skiplists (i.e., intra-block aggregation design), and NetChain$^+$ requires to maintain a global MPT. 
It is worth noting that NetChain and NetChain$^+$ respectively yield up to $53\times$ and $15\times$ improvements in ADS construction time compared with vChain\textsubscript{1}, and yield up to $85\times$ and $31\times$ improvements compared with vChain\textsubscript{2}.
As for ADS storage cost, we observe that NetChain reduces the ADS size by $65\%$ and $77\%$ compared with vChain\textsubscript{1} and vChain\textsubscript{2}, respectively, with the same above reason. 
Note that, the ADS size of NetChain$^{+}$ slightly exceeds that of NetChain since that there is an extra field needs to be added into the block header in NetChain$^+$.
In addition, the size of the block header is 112 bytes for NetChain and vChain\textsubscript{1}, and 144 bytes for NetChain$^{+}$ and vChain\textsubscript{2}.
The above results demonstrate the efficiency of our solutions in ADS construction.

\begin{figure}[t]
\setlength{\abovecaptionskip}{0.cm}
\setcounter{subfigure}{0}
\center
\subfigure[Wiki dataset]
{\includegraphics[height=2cm, width=3.6cm]{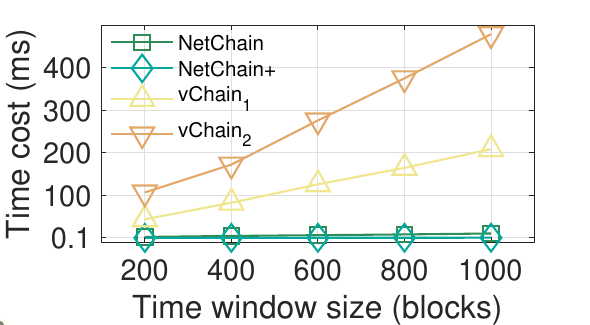}}
\quad
\subfigure[Email dataset]
{\includegraphics[height=2cm, width=3.6cm]{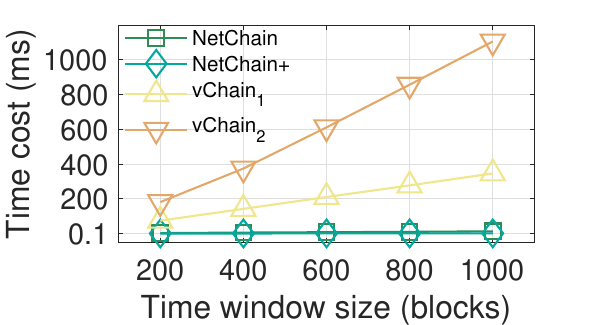}}
\quad
\subfigure[GPlus dataset]
{\includegraphics[height=2cm, width=3.6cm]{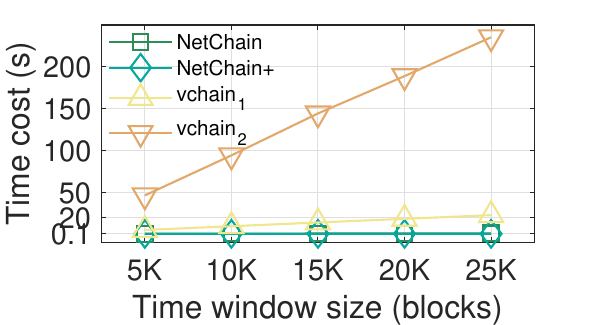}}
\renewcommand{\figurename}{Fig.}
\caption{Search latency.}
\label{exp:search}
\vspace{-2em}
\end{figure}

\begin{figure}[t]
\setlength{\abovecaptionskip}{0.cm}
\setcounter{subfigure}{0}
\center
\subfigure[Wiki dataset]
{\includegraphics[height=2cm, width=3.6cm]{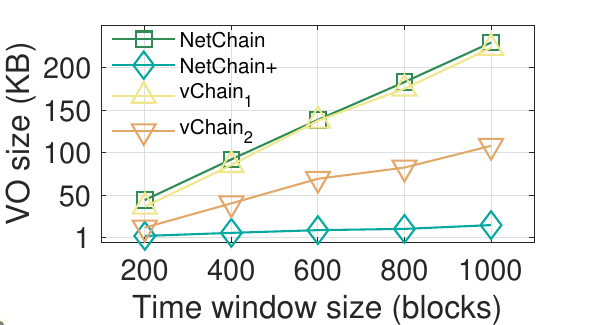}}
\quad
\subfigure[Email dataset]
{\includegraphics[height=2cm, width=3.6cm]{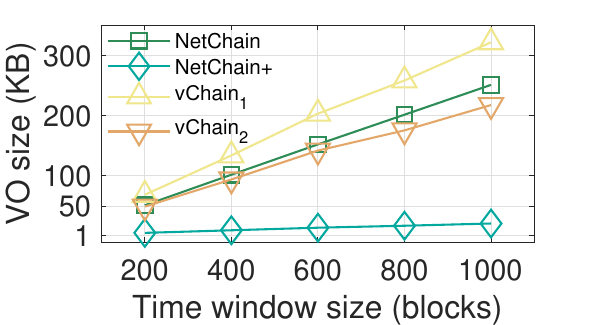}}
\quad
\subfigure[GPlus dataset]
{\includegraphics[height=2cm, width=3.6cm]{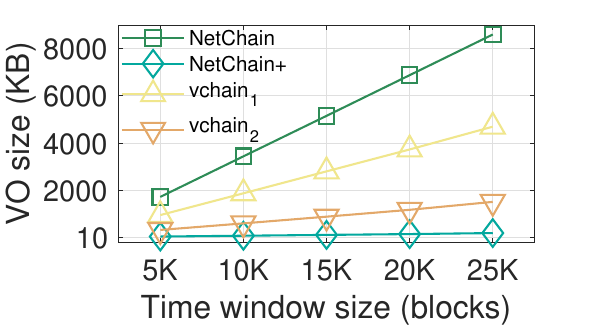}}
\renewcommand{\figurename}{Fig.}
\caption{Response size.}
\label{exp:vo}
\end{figure}

\begin{figure}[t]
\setlength{\abovecaptionskip}{0.cm}
\setcounter{subfigure}{0}
\center
\subfigure[Wiki dataset]
{\includegraphics[height=2cm, width=3.6cm]{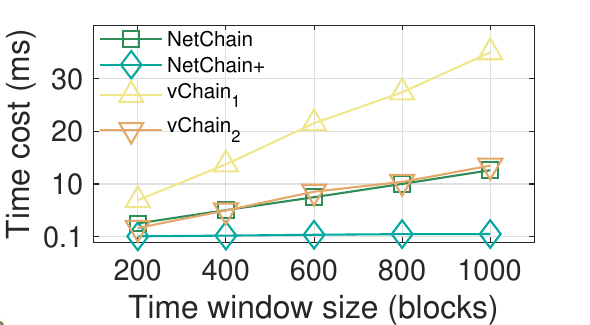}}
\quad
\subfigure[Email dataset]
{\includegraphics[height=2cm, width=3.6cm]{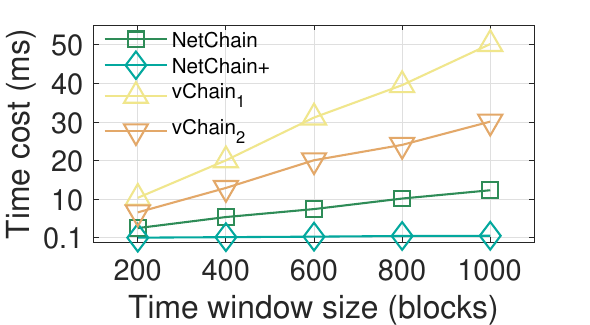}}
\quad
\subfigure[GPlus dataset]
{\includegraphics[height=2cm, width=3.6cm]{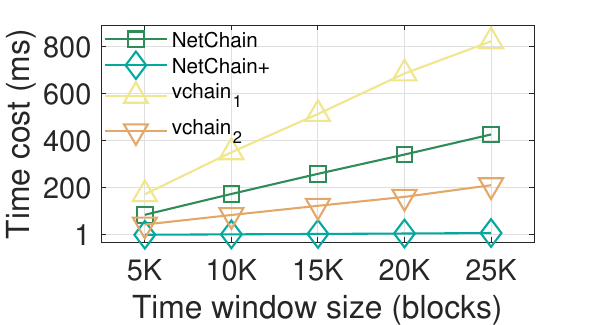}}
\renewcommand{\figurename}{Fig.}
\caption{Verify latency.}
\label{exp:verify}
\vspace{-1.2em}\end{figure}


\noindent{\underline{\emph{Search Performance.}}} Here, we evaluate the search performance of all considered solutions.
Fig.\ref{exp:search} shows that the search time of all solutions increases approximately linearly with time window size. As expected, NetChain$^{+}$ exhibits the best search performance, and the search time of NetChain$^{+}$ is at least $10\times$, $310\times$, and $790\times$ faster than that of NetChain, vChain\textsubscript{1} and vChain\textsubscript{2} respectively. The reason is that NetChain$^{+}$ avoids generating proofs for those non-matched blocks (i.e., the way adopted in NetChain) by using inter-block links, which greatly accelerates the process. Meanwhile, vChain\textsubscript{1} requires to generate proofs for all blocks where lots of extremely expensive accumulator proof operations are invoked, while vChain\textsubscript{2} needs to generate proofs for skiplists additionally. The above results demonstrate the efficiency of our solutions in search performance.

\noindent{\underline{\emph{Communication Cost.}}} Then, we evaluate the response size of all considered solutions. Fig.\ref{exp:vo} reports the the size of response sent from SP.
As expected, NetChain$^{+}$ still yields responses with minimum size, while the response caused by NetChain, vChain\textsubscript{2}, vChain\textsubscript{1} is at least $12.1\times$, $15.3\times$ and $7.6\times$ larger than that of NetChain$^{+}$, respectively.
This is because NetChain$^{+}$ merely adds R and VO of those matched blocks into the response, while other solutions either generate proofs for all blocks in time window (in NetChain and vChain\textsubscript{1}), or needs to generate proofs for invalid skips in addition (in vChain\textsubscript{2}).
We note that NetChain even incurs response $2\times$ larger than that of vChain\textsubscript{1} under GPlus dataset, since a proof in NetChain contains several complete merkle paths, while the one in vChain\textsubscript{1} ideally contains a few of MHT nodes. The above results demonstrate the efficiency of our solutions in communication.

\begin{figure}[t]
\setlength{\abovecaptionskip}{0.cm}
\setcounter{subfigure}{0}
\center
\subfigure[Search latency]
{\includegraphics[height=2cm, width=3.6cm]{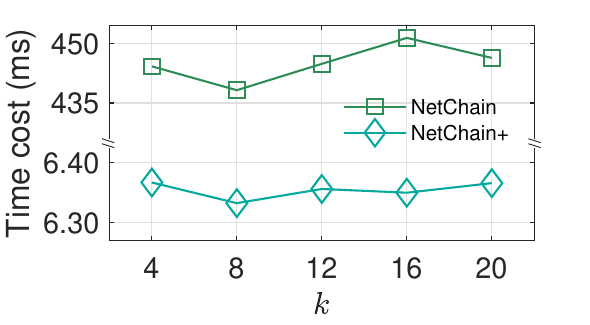}}
\quad
\subfigure[Response size]
{\includegraphics[height=2cm, width=3.6cm]{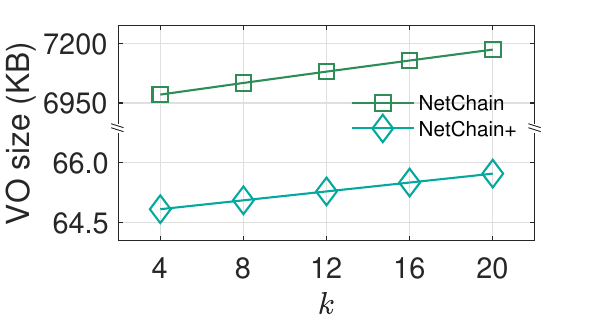}}
\quad
\subfigure[Verify latency]
{\includegraphics[height=2cm, width=3.6cm]{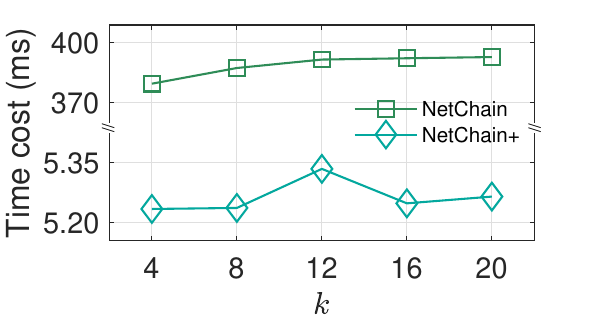}}
\renewcommand{\figurename}{Fig.}
\caption{Performance on different $k$.}
\label{exp:k}
\end{figure}

\noindent{\underline{\emph{Verification Performance.}}} Next, we evaluate the verification time of all considered solutions. As seen from Fig.\ref{exp:verify}, for every dataset, NetChain$^{+}$ performs the best, while vChain\textsubscript{1} performs the worst. More detailed, NetChain$^+$ is at least $22.9\times$, $90.5\times$ and $25.3\times$ faster than NetChain, vChain\textsubscript{1} and vChain\textsubscript{2}, respectively. There are two main reasons, on the one hand, NetChain$^{+}$ only verifies proofs for matched blocks, which mainly calls speedy hash functions.
On the other hand, vChain\textsubscript{1} requires to verify proofs for all matched blocks and frequently invokes expensive accumulator verification operations. In particular, NetChain is faster compared with vChain\textsubscript{2} under small datasets (e.g., Wiki and Email), but slower under large datasets (e.g., GPlus).
The reason is that though having inter-block index, vChain\textsubscript{2} spends huge computation resources on the accumulator proofs, thus vChain\textsubscript{2} only performs better under large datasets benefited from its block pruning strategy. The above results demonstrate the efficiency of our solutions in verification performance.

\noindent{\underline{\emph{Effect of Parameter $k$.}}} Finally, we explore the impact of parameter $k$ on our schemes under GPlus dataset as shown in Fig.\ref{exp:k}, where the time window size is set to 25,000.
The response size and verification time of both NetChain and NetChain$^{+}$ show an overall subtle increasing trend when $k$ grows since the size of R is affected by $k$, but the VO size remains constant, which is further larger than that of R.

\vspace{-1em}
\section{Conclusion}
In this paper, we propose a novel framework NetChain to provide authenticated top-$k$ graph queries over graph-oriented blockchain.
In NetChain, we design a novel authenticated two-layer index as the build-in ADS to reduce ADS construction cost and deal with object (non-)existence efficiently.
To further alleviate the authentication overhead, we propose NetChain$^+$ by customizing an inter-block link construction and a two-round scan mechanism.
Security analysis and extensive experiments show the security and efficiency of our proposed frameworks. In the future, we will strive to support more types of graph queries to provide flexible graph data asset management.

\bibliographystyle{splncs04}
\bibliography{ref.bib}

\end{document}